\newtheorem{proposition}{Proposition}
\newenvironment{work}{[[}{]]}
\newcommand{\hb}{\rightarrow}
\newcommand{\rt}{\textit{rt}}
\newcommand{\event}{\textit{event}}
\newtheorem{definition}{Definition}
\begin{document}


\title[mode=title]{Defining Atomicity (and Integrity) for Snapshots of Storage in Forensic Computing}
\shorttitle{Defining Atomicity (and Integrity) for Storage Snapshots}

\shortauthors{Ottmann et~al.}


\author[1]{Jenny Ottmann}[orcid=0000-0003-1090-0566]
\cormark[1]
\ead{jenny.ottmann@fau.de}
\credit{Conceptualization, Methodology, Investigation, Writing - Original Draft, Writing - Review and Editing}

\author[2]{Frank Breitinger}[orcid=0000-0001-5261-4600]
\ead{frank.breitinger@unil.ch}
\ead[url]{https://FBreitinger.de}
\credit{Conceptualization, Writing - Review and Editing, Supervision}

\author[1]{Felix Freiling}[orcid=0000-0002-8279-8401]
\cormark[1]
\ead{felix.freiling@fau.de}
\credit{Conceptualization, Methodology, Investigation, Writing - Original Draft, Writing - Review and Editing, Supervision}

\address[1]{Department of Computer Science,
  Friedrich-Alexander-Universit\"at Erlangen-N\"urnberg (FAU),
  Erlangen, Germany}

\address[2]{School of Criminal Justice, 
  University of Lausanne, 1015 Lausanne, Switzerland}

\cortext[1]{Corresponding authors.}

\nonumnote{Copyright remains with the authors.}

\begin{abstract}
  The acquisition of data from main memory or from hard disk storage
  is usually one of the first steps in a forensic investigation. We
  revisit the discussion on quality criteria for ``forensically
  sound'' acquisition of such storage and propose a new way to capture
  the intent to acquire an \emph{instantaneous} snapshot from a single
  target system.  The idea of our definition is to allow a certain
  flexibility into when individual portions of memory are acquired,
  but at the same time require being consistent with causality (i.e.,
  cause/effect relations). Our concept is much stronger than the
  original notion of atomicity defined by \citet{correctness} but
  still attainable using copy-on-write mechanisms. As a minor result, we also
  fix a conceptual problem within the original definition of integrity.
\end{abstract}

\begin{keywords}
storage acquisition \sep instantaneous snapshot \sep correctness \sep integrity
\end{keywords}

\maketitle

\section{Introduction}


Data from storage devices or main memory are crucial pieces of evidence
today. The acquisition of such data usually means to
copy the data from storage to another storage (controlled by the
analyst) in a way that preserves as much of its evidential value as
possible. A common way to define a ``good'' copy is to formulate a set
of quality metrics that capture the intention of forensic soundness.

Attaining good quality copies appears seemingly simple if storage can
be ``frozen''. As an example, there has been little debate about
the classical way to produce a forensic copy of a hard disk using
\texttt{dd} as described by \citet{Carrier2005}. This is in contrast
to the acquisition of main memory which --- apart from approaches that
literally ``freeze'' RAM
\citep{DBLP:journals/cacm/HaldermanSHCPCFAF09} ---- has received
considerable attention if the acquisition target is not a virtual
machine
\citep{correctness,introducing,Inoue:2011:VIT,Campbell:2013:VMA,Lempereur:2012:PAP,evaluatingat}.
In the form of solid state drives, hard disks have turned into
increasingly active devices which has made forensic data acquisition
in the classical sense impossible \citep{Nisbet:2013:FAA}. In practice, circumstances may
prohibit freezing altogether even though it may be technically
feasible. Examples are production servers that cannot be paused for
operational reasons. On such systems, acquisition is often improvised
and part of a live analysis.

\subsection{Inconsistencies in RAM acquisition} \label{sec:introincons}

As mentioned, quality criteria for data acquisition have most
often been discussed in the context of volatile memory because of the
common problems that occur if RAM is acquired inconsistently.

\begin{fullversion}
  Main memory analysis promises the retrieval of valuable data even on
  systems which encrypt persistent data. However, the quality of the
  acquired main memory snapshot significantly influences how much data can
  be retrieved. If it is impossible to freeze the system when taking
  the memory snapshot, a number of problems can occur because of
  concurrent changes to the memory.
\end{fullversion}
Page smearing, for example, is a common problem on systems under heavy
load or with more than 8 GB of RAM according to \citet{pathforw}. When
page smearing occurs, the page tables in the memory snapshot are not
consistent with the contents of the physical pages because changes
were made to the referenced physical pages after the page tables were
acquired. This can result, for example, in pages being attributed to
wrong processes or inconsistencies in kernel data structures.  Some of
these inconsistencies can lead to problems during the memory analysis
or hinder an analysis completely if it uses kernel data structures. 
But not all inconsistencies have to be apparent during an
analysis. Therefore, precise measurement criteria, to define the
quality of a memory imaging method, can help to evaluate tools
without having to rely on visible problems.

\begin{fullversion}
\citet{introducing} define three different types of inconsistencies which
can occur in time inconsistent snapshots: \emph{fragment
inconsistency}, \emph{pointer inconsistency}, and \emph{value inconsistency}.
The first type of inconsistency concerns objects which are spread over multiple
pages. If such objects are acquired time inconsistently, the contents of the
different pages do not fit together, some pages could contain older values while
others already contain updated values. Pointer inconsistencies concern the page
in which the address of a reference is stored and the page that stores the
referenced contents. If they are acquired time inconsistently, the reference
might, for example, point to a page that has not been allocated yet. The last type
of inconsistency concerns the contents of a page that describe those of another
page. In a time inconsistent snapshot the description might not fit the contents
of the described page anymore. For example, the number of entries in an array
stored in another object is inconsistent with the actual number of entries in
the array in the snapshot.
In their evaluation they observed the influence of time on the quality of the
snapshots.
As an example for inconsistencies in kernel data structures, they
compared the number of stored virtual memory areas
(VMAs) of a process in two different kernel data structures to the VMA counter of the
process. Since a mismatch does not neccessarily imply that the
analysis will be impeded, they also took a closer look at the acquired VMAs to check
if all expected segments, in this case \emph{code} and \emph{stack}, could be
found. They report that in more than 80\% of the analyzed snapshots inconsistencies
were found and that in some cases code and/or stack segment could not be found.
For a closer look at the possible implications of inconsistencies on
the analysis of process address spaces they tried to reconstuct the stack of a
process. The contents of the acquired stack pages were compared to an
atomically acquired view on the stack that was created each time one of the
stack pages was imaged. In their experiment an uncorrupted backtrace could only
be created in one out of ten attempts.
They conclude that information about the time at which the pages were acquired
should be made available to analysts because it plays an important role for the
quality of a memory snapshot. They also implemented changes to show include the
gathering of the information in LiME and it display in Volatility.
Additionally, they performed experiments with a changed implementation of
LiME in which not all pages are acquired sequentially. Instead, the acquisition starts
with pages containing important information for forensic analysis before proceding sequentially for the
remaining address space. Using the adjusted implementation they repeated their
experiments and were able to gather all code and stack segments for the tested
processes as well as correct backtraces.
\end{fullversion}

\subsection{The quest for suitable quality criteria}

To be able to qualify the effects of RAM acquisition, 
\citet{correctness} introduced three criteria to evaluate the quality of a
memory snapshot: \emph{correctness}, \emph{integrity}, and \emph{atomicity}. A
correct snapshot contains all memory regions of the main
memory and for each region exactly the value it had in main memory at the
time of the acquisition. Thus, to achieve correctness not only a correct
implementation is necessary but the used system components must return the
correct values as well. The integrity criterion focuses on memory content
changes between the start of the acquisition process and the acquisition of each
memory region. Integrity is violated for memory content that changed  after
the acquisition was started and before it could be copied by
the acquisition process. The atomicity criterion in
contrast allows changes of memory contents if the acquired memory
regions are causally consistent. This means that no memory region content in the
snapshot was influenced by changes to a memory region that are not part of the
snapshot. 

Recently, \citet{introducing} criticized the atomicity definition by
\citet{correctness} for being ``extremely difficult to measure in
practice''. Instead, they suggested a criterion called \emph{time
  consistency}. A snapshot is time consistent if there ``exists a
hypothetical atomic acquisition process that could have returned the
same result''.  However, they do not provide a precise formalization
of this concept.

\subsection{Related work} 

There exists a large body of work that investigates the creation of
snapshots in distributed concurrent systems 
often with the aim to detect predicates on the state of a distributed
computation \citep{DBLP:journals/dc/ChaseG98}.  In this work, focus
has  been on \emph{asynchronous} distributed systems where the
best available notion of time is causality
\citep{virtual,DBLP:journals/dc/SchwarzM94}. In such systems,
concurrent execution of events makes the global state
``relativistic'', i.e., it is often not possible to exactly say in
which state the system is or has been
\citep{DBLP:conf/pdd/CooperM91,DBLP:conf/srds/GartnerK00,DBLP:conf/ACISicis/ChuB08}. 


In forensic computing, data acquisition is (currently) usually
performed in a synchronous environment. While concurrency arises even
in such systems from different threads that operate on shared memory,
such systems provide a common centralized clock that can be used
to order events.
%
%
If events can potentially be totally ordered, the sequence of global
states through which the system progresses is well
defined. In contrast to the assumptions made in previous theoretical work that describes
algorithms for predicate detection in synchronizable systems
\citep{DBLP:journals/dc/Stoller00}, real systems usually do not keep
track of timestamps of individual events.
%
%
The application of complex snapshot algorithms in forensics appears
not advisable anyway since taking forensic snapshots should minimize
interference with the observed system.
So, while the literature on distributed systems gives many insights
into the problem area, we are not aware of work that is of direct
help.


Other related work is concerned with measurement of the quality of
snapshots. Early work avoided the need to define quality criteria by
simply comparing the output of a tool with the memory content of the
machine from which the snapshot was taken
\citep{Inoue:2011:VIT,Campbell:2013:VMA,Lempereur:2012:PAP}.
%
%
\citet{evalplat} were the first to perform a practical
evaluation of memory acquisition tools against the abstract quality
criteria of \citet{correctness}. They implemented the evaluation
platform using \emph{Bochs} and took a white-box testing
approach. 
\begin{fullversion}
  First the code of the tested tools was examined and hypercalls were
  inserted to enable the host process to monitor the acquisition
  process. It is signaled to the evaluation platform among others when
  the acquisition process is started and ended, when the imaging of a
  page begins and is completed, and when the acquisition program is
  loaded into memory and unloaded.
\end{fullversion}
With the help of \begin{fullversion}the\end{fullversion} inserted
hypercalls, three tools were evaluated. Correctness could be measured
exactly by comparing the memory image created by the acquisition
process to an image created in parallel by the host. Atomicity could
not be measured exactly as this would have required to keep track of
all causal dependencies in the guest, a task deemed nearly infeasible
by the authors. Instead, possible atomicity violations were measured by
keeping track of which threads accessed already acquired pages and
then modified a page that was not already acquired.  Therefore, the
results present an upper bound of atomicity violations. Integrity was
estimated by comparing a memory image taken by the host shortly before
the acquisition process was loaded into the guest memory with one
taken by the host shortly after the acquisition process finished. 
\begin{fullversion}
  The results showed that the tools acquired the memory correctly. In
  the case of atomicity the percentage of pages for which atomicity
  was possibly violated was higher for larger memory sizes and reached
  up to 75\%. They attributed this mainly to the increased time
  necessary to complete the acquisition process. The amount of
  integrity violations on the other hand was smaller the larger the
  assigned guest memory was. This can be explained by the fact that
  the system load was not increased as well.  Apart from the
  limitations regarding the methods to evaluate atomicity and
  integrity, the platform is also limited by the relatively small
  amount of memory that can be assigned to a guest, 2 GB max, as well
  as the inability to evaluate 64-bit or closed source applications.
\end{fullversion}

Building on the results by \citet{evalplat}, \citet{evaluatingat} took
correctness for granted and followed a black-box approach to measure
atomicity and integrity.
Because their method does not rely on modifying the source code
of tools, more tools could be evaluated, including direct memory access (DMA) and cold boot.
\begin{fullversion}
  In total 12 tools, relying on different snapshoting methods,
  e.g. kernel-level acquisition and DMA, were tested.
\end{fullversion}
For the tests they wrote a program which allocates sequentially
numbered memory regions and one to extract the numbered regions from a
memory snapshot. The numbers serve as a counter that allows to
\emph{estimate} the level of atomicity and integrity. 
\begin{fullversion}
  This is done by calculating the atomicity and integrity delta
  respectively. The time elapsed between the acquisition of the first
  memory region and the last memory region is the atomicity delta
  because the longer an acquisition takes, the more atomicity
  violations become likely. Since integrity is defined dependent on
  the time at which the acquisition was started, the integrity delta
  is the average time elapsed between the start of the acquisition and
  the imaging of each memory region.  The results of their experiment
  show that, as expected by the authors, methods that allow to freeze
  the state of the system achieve the highest atomicity and
  integrity. For the other methods, tools belonging to the same
  category achieved similar results.
\end{fullversion}

\subsection{Contributions}

In this paper, we revisit \citet{correctness} and follow the demand
formulated by \cite{introducing} for more ``permissive'' quality
metrics for the acquisition of storage: We formalize two new
definitions of atomicity which we call \emph{instantaneous} and
\emph{quasi-instantaneous consistency}. Both can be seen as possible
formalizations of the notion of ``time consistent'' by
\citet{introducing}.

Instantaneous consistency resembles the quality of an ``ideal''
snapshot taken from a frozen system and implies quasi-instantaneous
consistency. But although being slightly weaker in guarantees, a
quasi-instantaneous snapshot is indistinguishable from an
instantaneous snapshot. We show that quasi-instantaneous
snapshots can be achieved (by performing memory snapshots using the
idea of copy-on-write). Moreover, under certain assumptions quasi-instantaneous consistency
implies the (classic) causal consistency of \citet{correctness} so
quasi-instantaneous snapshots do not violate cause-effect
relations. Since the common memory snapshot techniques based on
software generally do not even guarantee causal consistency, we also
raise the question of how to assess a memory snapshot regarding its
level of atomicity.

As a minor contribution, we propose a new definition of integrity that
is refined from \citet{correctness} and removes some of its
theoretical weaknesses. We formulate all concepts
independent from concrete storage technologies so that they can be
applied to any block-based digital storage, be it volatile or
persistent.

\subsection{Outline}

The following section introduces the model used to formalize our
concepts.  Section~\ref{sec:qualcrit} continues with a formal
definition of the two new forms of atomicity.  Section~\ref{sec:achieving} 
provides an overview of methods with which these
consistency criteria can be achieved, while
Section~\ref{sec:measuring} discusses some ideas on how to evaluate snapshots of storage with 
respect to the new metrics. Section~\ref{sec:integrity}
briefly presents our result on the notion of
integrity. Old and new concepts are compared in Section~\ref{sec:relations}.
Section~\ref{sec:legal}  discusses legal implications
of our concepts for concrete investigations. 
Finally, Section~\ref{sec:concl} concludes the paper.

\section{Model}
\label{sec:model}

We now define the notation to describe computations on memory regions
and the timing relations of the events that happen therein (the
definitions are adapted from \citet{DBLP:conf/icdcs/ZhengG19} using
the timing notation of \citet{DBLP:conf/ACISicis/ChuB08}).

\subsection{Processes, memory regions and events}

We consider a finite set $P=\{p_1,\ldots\}$ of processes (or threads)
that perform operations on a set of $n$ memory regions
$R=\{r_1,\ldots,r_n\}$.
%
%
Performing an operation results in an event $e=(p,r)$, where $e.p$
denotes the process that performed $e$ and $e.r$ denotes the memory
region on which the process performed the operation.
The set of all events is denoted $E$.
We assume that operations on a single memory region are performed
sequentially (e.g., by using hardware arbitration or locks).

\subsection{Space/time diagrams and cuts}

\begin{figure}
  \centering
  \includegraphics[scale=0.95]{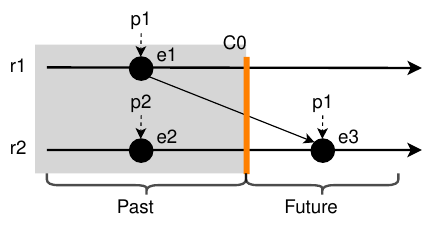}
  \caption{Space/time diagram of a computation and one possible cut
    C0. The events to the left of the cut are part of the past, those
    to the right part of the future.}
  \label{fig:simplecut}
\end{figure}

We use space/time diagrams, commonly used in distributed computing
\citep{virtual}, to depict computations. The sequence of
events within the memory regions serves as timeline from left to right,
and the sequential activities of processes are depicted as arrows that
connect events. An example is shown in Fig.~\ref{fig:simplecut} with
two memory regions $r_1$ and $r_2$, a process $p_1$ executing events
$e_1$ and $e_3$ and a process $p_2$ executing event $e_2$.

A \emph{cut} through the space/time diagram is indicated by a line
that intersects each memory region exactly once. Formally, a cut is a
subset of events of the computation and can be regarded as separating
the events into a ``past'' (to the left of the cut) and a ``future''
(to the right of the cut). Fig.~\ref{fig:simplecut} shows an example of a cut
through a computation.

\subsection{Causal order on events}

A computation is modeled as a tuple $(E,\hb)$ where $E$ is the set of
events and $\hb$ is the \emph{causal order} on $E$, i.e., the smallest
transitive relation such that
\begin{enumerate}
\item if $e.p=f.p$ and $e$ immediately precedes $f$ in the sequential
  order of that process, then $e\hb f$, and
\item if $e.r=f.r$ and $e$ immediately precedes $f$ in that memory
  region, then $e\hb f$.
\end{enumerate}
The order $\hb$ corresponds to Lamport's happened-before relation
\citep{DBLP:journals/cacm/Lamport78}.

The order $\hb$ merely encodes which events might have influenced
which other events, i.e., if $e\hb f$ or $f\hb e$ then either $e$ may
have influenced $f$ or vice versa. However, if neither $e\hb f$ nor
$f\hb e$ we say that $e$ and $f$ are \emph{concurrent}, i.e., it is
not possible to order the two events regarding causality.

\subsection{Observability of causal relations}

The causal order relation between events is by definition independent
of the concrete values that processes read from or write to memory
regions. For example, in the computation depicted in
Fig.~\ref{fig:simplecut} all events $e_1$, $e_2$ and $e_3$ could be
read events that do not modify the content of the memory regions. 
Causal dependencies, therefore, may not be observable
unless they are somehow reflected by the stored values. A minimum
requirement for events to be observable is that they perform a state
change of the memory region, e.g., to change the stored value from 0
to 1. Events that always update the state of a memory region to a
different state as before are called \emph{modifying events}.

Merely requiring that events modify the state of a memory region does
not imply that state changes can always be observed. The reason for
this is that subsequent state changes can annihilate effects of
previous state changes. For example, event $e_2$ in
Fig.~\ref{fig:simplecut} could change the value of memory region $r_2$
from 0 to 1, and event $e_3$ could change it back from 1 to 0. The
fact that an event has occurred is not observable if the starting and
ending state of $r_2$ is inspected. This can be avoided by demanding
that every event assigns a ``fresh'' value to the memory region. This
is the case, for example, if the stored value is a counter that is
incremented with every event. Events that change the value of the
memory region to a new unique value are called \emph{uniquely
  modifying events}.

Techniques to observe causal relationships in distributed systems
(like vector clocks \citep{virtual}) are commonly based on the
assumption of uniquely modifying events. We will revisit these
concepts later when exploring the compatibility between our new
consistency notions and causal order.

\subsection{Consistent global states}

Cuts are often considered as representations of global states of the
computation.
Fig.~\ref{fig:lattice3} depicts 
multiple possible cuts through the computation shown in
Fig.~\ref{fig:simplecut}. For example, cut $c_0$ is the
initial cut (no event has happened yet), $c_1$ is the cut where $e_1$
is the only event that has happened, and $c_4$ is the cut where $e_2$
and $e_3$ have happened but not $e_1$. 
The causal order $\hb$ on events induces a partial order
on these states that form a lattice. The lattice of all such global
states of the computation shown in Fig.~\ref{fig:lattice3} is depicted in
Fig.~\ref{fig:lattice4}.

\begin{figure}
  \centering
  \includegraphics[scale=0.95]{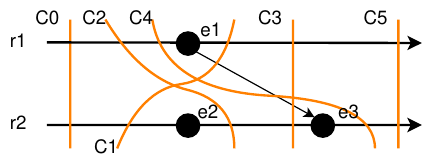}
  \caption{Space/time diagram and possible cuts of a computation.}
  \label{fig:lattice3}
\end{figure}

\begin{figure}
  \centering
	\includegraphics[scale=0.95]{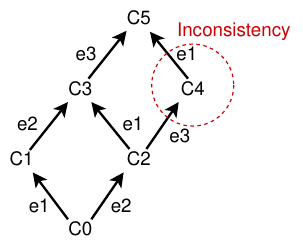}
  \caption{Lattice of global states of the computation depicted in
    Fig.~\ref{fig:lattice3}.}
  \label{fig:lattice4}
\end{figure}

Note, in the absence of any notion of real-time, it cannot be
determined whether event $e_1$ happened before $e_2$ or not (with
respect to $\hb$ they are not ordered). Hence, it cannot be
determined which sequence of global states occurred in the
computation, as long as the cut respects the causality relation
$\hb$. Cut $c_4$ is one that violates $\hb$ in that events $e_2$ and
$e_3$ are contained in the global state while $e_1$ is not. This 
cannot have happened since $e_3$ is the effect of $e_1$, i.e., if
$e_3$ is contained in the cut, then $e_1$ also must be. This is the
basis of the definition of a \emph{consistent cut}.

\begin{fullversion}
  Formally, a consistent cut $c$ is closed with respect to $\hb$, i.e.,
  $$\forall e\in c: e'\hb e \Rightarrow e'\in c.$$
\end{fullversion}

\subsection{Realtime}

In systems where real-time clocks are available, it may be possible to
order two events $e_1$ and $e_2$ in Fig.~\ref{fig:lattice3} by
comparing the real-time readings of when they occurred. For an event
$e$ we denote by $\rt(e)$ the real-time clock reading when $e$
occurred. Formally, $\rt$ is a function mapping the set of events to
the time domain $T$. For simplicity and without loss of generality, we
equate $T$ with the set of natural numbers.
Using $\rt$, it is possible to transform the partial order $\hb$ into a
total order by ordering every event $e\in E$ using $\rt(e)$.

\subsection{Snapshots}

The effects of events are possible value changes in the memory
regions. Following the notation of \citet{correctness}, we define the
set of all possible values of a memory region as $V$.
The contents of the memory regions can then be formalized as a
function $m: R \times T \rightarrow V$ that returns the value of a
specific memory region at a specific point in time. Function $m$
encodes a form of \emph{ground truth} of what values the
memory contained at any specific time.

Informally, a \emph{snapshot} is a copy of all memory regions. Since
individual memory regions might be copied at different points in time,
we formalize a snapshot as a function $s: R \rightarrow V \times T$,
i.e., for every memory region we store the value and the time this
value was copied from memory. We denote by $s(r).v$ the value stored
for region $r$ in snapshot $s$ and by $s(r).t$ the corresponding
time. For example, if $s(r_1)=(15, 3)$ then memory region $r_1$ was
copied at time $s(r_i).t=3$ with a value of $s(r_i).v = 15$.
Snapshots correspond to cuts through the space-time diagram of a
computation.


Taking a snapshot of a computation means to copy the current values
from memory regions into the snapshot, but a snapshot does not contain
any references to events that have happened. To be able to formally
connect events in and snapshots of a computation, we introduce one
additional notation: For a real-time value $t$ and a memory region $r$
we denote by $\event(r,t)$ the \emph{most recent event} that happened
on memory region $r$ at a time before or equal to $t$. Formally,
$\event$ is a function $\event : R\times T \rightarrow E$. If
$\event(r,t)=e$ then $\rt(e)\leq t$ and there exists no other event on
$r$ that happened between $\rt(e)$ and $t$.

Technically, the set of events contained in the cut corresponding to a
snapshot $s$ consists of all events that lie to the left of
$\event(r, s(r).t)$ (including the event itself).

\section{Defining Atomicity}
\label{sec:qualcrit}

Intuitively, atomicity is a notion to characterize the degree of
freedom of signs of concurrent activity. High atomicity therefore
attempts to bound the effects that arise from an observation taking
place concurrently to a computation.

\subsection{Causal consistency}

The original definition of atomicity introduced by \citet{correctness}
is based on the causal dependency relation $\hb$ between events. It
states that the set of events derived from a snapshot corresponds
to a consistent cut. The rationale behind this definition was that
snapshots should respect causality, i.e., for each effect the snapshot
contains its cause. The definition rules out any inconsistent cuts as
allowed snapshots. Such an example is depicted in
Fig.~\ref{fig:causincons} where two events $e_1$ and $e_2$ occurred on
region $r_1$ and $r_2$ respectively and $e_1\hb e_2$. A snapshot that
contains the contents of $r_1$ before $e_1$ happened and the contents
of $r_2$ after $e_2$ happened is causally inconsistent because the
change introduced by $e_1$ that caused $e_2$ is missing.

\begin{figure}
  \centering
  \includegraphics[scale=0.95]{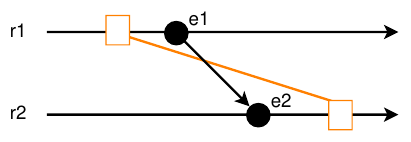}
  \caption{Causally inconsistent snapshot.}
  \label{fig:causincons}
\end{figure}

\citet{correctness} argued that snapshots should at least be
consistent with causality because causally inconsistent snapshots
clearly cannot have happened. Unfortunately, many software-based
snapshot approaches for RAM do not produce even causally consistent
snapshots.

\subsection{Instantaneous consistency}

Causally consistent snapshots guarantee that snapshots respect causal
relationships. However, causal consistency is a notion defined for
asynchronous distributed system, i.e., systems where no notion of
real-time exists and time is reduced to causality. In such systems,
events can be re-ordered along the sequential timeline if
causal relationships are respected. Therefore, \emph{every} consistent
global state is a state that the computation \emph{potentially} could
have passed through.
In practice, and in particular in those systems that we focus on here
(smartphones, PCs, servers), often a notion of real-time exists that
allows to narrow down the set of consistent global states that
\emph{actually} have happened \citep{DBLP:journals/dc/Stoller00}.

Based on these insights, we now define an idealistic (and much
stricter) consistency criterion based on the time at which the memory
regions are copied.  The notion of \emph{instantaneous consistency}
formalizes the idealistic intent of snapshots in which all memory
regions are copied at exactly the same time.

\begin{definition}[instantaneous consistency]
  A snapshot $s$ satisfies \emph{instantaneous consistency} iff all
  memory regions in $s$ were acquired at the same point in
  time. Formally:
  $$ \forall r, r'\in R: s(r).t = s(r').t$$
\end{definition}

If a snapshot satisfies instantaneous consistency, we say that the
snapshot \emph{is} instantaneous.  Obtaining instantaneous snapshots
is possible if the system of which the memory contents are extracted
can be paused, for example when the main memory of a virtual machine
is dumped. An example of an instantaneous snapshot is depicted in
Fig.~\ref{fig:insta}.

\begin{figure}
  \centering
  \includegraphics[scale=0.95]{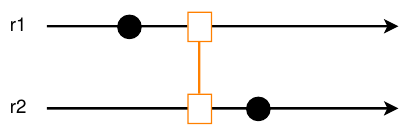}
  \caption{When an instantaneous snapshot is taken, all memory regions
    are copied at the same time.}
  \label{fig:insta}
\end{figure}

From a forensic point of view, it is desirable that a snapshot is
instantaneous because it resembles something that is easy to
understand and ``obviously'' free of any problems of
concurrency. This aspect is important for legal proceedings in which
doubts on the way evidence was gathered can severely degrade
its evidential value.

\subsection{Quasi-instantaneous consistency}

Taking instantaneous snapshots usually requires freezing the system
from which memory is copied, at least this is the case for systems
where no inherent (hardware) mechanism exists to copy all memory
regions at the same time. So taking instantaneous snapshots in
practice is hard, if not impossible. 

We therefore define a slightly weaker criterion that captures the
main ideas of instantaneous consistency while allowing to take
snapshots without freezing the system. We call this
\emph{quasi-instantaneous consistency}.


\begin{definition}[quasi-instantaneous consistency]	
  A snapshot $s$ satisfies \emph{quasi-instantaneous consistency} iff
  the values in the snapshot could have also been acquired with an
  instantaneous snapshot $s'$. Formally:
  \begin{multline*}
    \exists s': (\forall r, r'\in R: s'(r).t = s'(r').t) \mathrel{\land} \\
    \forall r \in R: s'(r).v = s(r).v
  \end{multline*}
\end{definition}

The above definition does not require that the snapshot 
\emph{is} taken instantaneously but that it \emph{could} have
been taken instantaneously, i.e., that the values of all memory
regions in the snapshot were coexistent in memory at (at
least) one point in time during the acquisition.

Fig.~\ref{fig:quasi} shows an example of a snapshot which is
quasi-instantaneous. In this example a point in time can be found at
which the contents of the two memory regions in the snapshot were
coexistent in memory. 
When such a point in time cannot be found, the snapshot is not
quasi-instantaneous. Assuming modifying events, Fig.~\ref{fig:notok}
shows an example for this case. Because of the order of the events
$e_1$ and $e_2$ and the time points at which the memory regions were
added to the snapshot, the snapshot contains values that were never
coexistent in main memory. In this case it is impossible to find a
time at which a snapshot containing the same values could have been
taken instantaneously.

\begin{figure}
  \centering
  \includegraphics[scale=0.95]{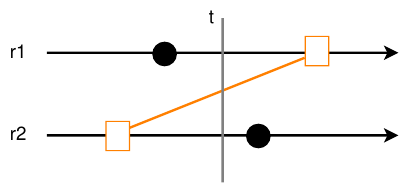}
  \caption{When a snapshot satisfies quasi-instantaneous consistency,
    a point in time can be found at which the contents of the memory
    regions in the snapshot coexisted in the copied main memory. The
    same result would have been achieved with an instantaneous
    snapshot at time $t$.}
  \label{fig:quasi}
\end{figure}

\begin{figure}
  \centering
  \includegraphics[scale=0.95]{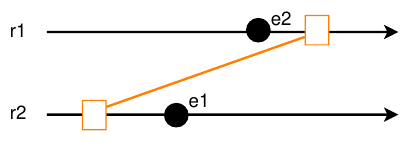}
  \caption{A snapshot that is not quasi-instantaneously consistent if $e_1$ and $e_2$ modify the values of $r_1$ and $r_2$, respectively.}
  \label{fig:notok}
\end{figure}

\begin{fullversion}
  Alternative definition of quasi-instantaneous consistency in analogy
  to the closure definition of causal consistency:

  $S$ is the set of events belonging to snapshot $s$ (see above). Then
  $s$ is quasi-instantaneous iff
  $$\forall e\in S: \rt(e') \leq \rt(e) \Rightarrow e'\in S$$ 
  
  Another way to see this is that the ``latest'' event (w.r.t.~$\rt$)
  defines the time of the snapshot.
\end{fullversion}

\subsection{Relations between the consistency definitions}



Instantaneous consistency is the strongest concept of the presented
consistency definitions. If all memory regions can be copied at the
same time, no inconsistencies can arise due to concurrent
activity. Therefore, instantaneous consistency implies
quasi-instantaneous consistency.

The relation between quasi-instantaneous and causal consistency is
slightly less apparent. We first argue that a causally consistent
snapshot is not necessarily quasi-instantaneously consistent. To see
this, consider the computation in Fig.~\ref{fig:notok} and note that
the events $e_1$ and $e_2$ are independent of each other. Therefore,
any snapshot of this computation is causally consistent. However, if
$e_1$ and $e_2$ are modifying events, the snapshot is not
quasi-instantaneously consistent. So causal consistency does not
generally imply quasi-instantaneous consistency.

But what about the inverse question, i.e., is every
quasi-instantaneous snapshot also causally consistent? Interestingly,
the answer to this question depends on the nature of events that
determine causal consistency. To see this, consider the computation in
Fig.~\ref{fig:quasiatomic} which is similar to the one depicted in
Fig.~\ref{fig:notok} but where events $e_1$ and $e_2$ have a causal
relationship. If neither $e_1$ nor $e_2$ are modifying events then the
values stored in memory do not change and so any snapshot would be
quasi-instantaneous, also the one depicted in
Fig.~\ref{fig:quasiatomic}. So in this case, a snapshot might be
quasi-instantaneously consistent but still causally inconsistent. But
even if we only have modifying events, the changes of $e_1$ or
$e_2$ could be reverted and the resulting quasi-instantaneously consistent
snapshot might again not be causally consistent. But if we
assume that we only have uniquely modifying events, this cannot happen
anymore.

\begin{proposition}
  If all events are uniquely modifying, then any quasi-instantaneously
  consistent snapshot is also causally consistent.
\end{proposition}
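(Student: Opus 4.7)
The plan is to leverage the instantaneous snapshot $s'$ whose existence is guaranteed by quasi-instantaneous consistency, show that it defines exactly the same cut as $s$, and then observe that any instantaneous snapshot automatically yields a causally consistent cut.

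First, I would unpack the definition: since $s$ is quasi-instantaneous, there exists an instantaneous snapshot $s'$ and a single time $t^\star$ with $s'(r).t = t^\star$ for all $r$ and $s'(r).v = s(r).v$ for all $r$. Associated with $s'$ is the cut $C' = \{e \in E : \rt(e) \leq t^\star\}$, and associated with $s$ is the cut $C$ consisting of all events lying to the left of (or equal to) $\event(r, s(r).t)$ for each $r$. The goal is to show $C = C'$ and that $C'$ is closed under $\hb$.

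The core step, and the one that actually uses the uniquely modifying hypothesis, is to argue that $\event(r, s(r).t) = \event(r, t^\star)$ for every region $r$. Both of these events determine the value stored in $r$ at the respective time, and by assumption both values equal $s(r).v$. If the two most recent events differed, the intervening modifying events would have produced at least one distinct value in between; and since every event assigns a fresh value never seen before on that region, no later event could restore an earlier value. Hence the value at time $s(r).t$ uniquely identifies the most recent event on $r$, forcing $\event(r, s(r).t) = \event(r, t^\star)$ and therefore $C = C'$.

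It remains to note that $C'$ is a consistent cut. Since the causal order is generated by the sequential order of events within a process and within a memory region, both of which respect $\rt$, we have $e \hb f \Rightarrow \rt(e) \leq \rt(f)$. Thus if $f \in C'$ (so $\rt(f) \leq t^\star$) and $e \hb f$, then $\rt(e) \leq t^\star$, i.e.\ $e \in C'$. So $C'$ is closed under $\hb$, and by $C = C'$ the snapshot $s$ is causally consistent.

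The main obstacle is the uniqueness step: one has to be careful to rule out the scenario in which two different events happen to produce the same value on $r$, so that two distinct times yield the same reading but correspond to different cuts. The assumption of uniquely modifying events is exactly what forecloses this possibility, and the proof would not go through with merely modifying events, as the discussion preceding the proposition already illustrates with the example from Fig.~\ref{fig:quasiatomic}.
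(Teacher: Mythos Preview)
Your proof is correct and follows essentially the same approach as the paper's: take the instantaneous witness $s'$, use the uniquely-modifying hypothesis to conclude that $s$ and $s'$ determine the same cut, and use that instantaneous snapshots are causally consistent. You spell out in detail two steps the paper leaves implicit---namely, why ``same values under uniquely modifying events'' forces $\event(r,s(r).t)=\event(r,t^\star)$, and why an instantaneous cut is automatically closed under $\hb$ via $e\hb f\Rightarrow \rt(e)\leq\rt(f)$---but the skeleton is identical.
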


\begin{proof}
  Let $s$ be a snapshot satisfying quasi-instantaneous
  consistency. From the definition of quasi-instantaneous consistency
  follows that there exists an instantaneous snapshot $s'$ that
  contains the same values as $s$ for every memory region. Since $s'$
  is instantaneous, it is also causally consistent. But since all
  events are uniquely modifying, no events can have happened between
  $s'$ and $s$. Therefore, $s$ is also causally consistent.
\end{proof}

\begin{figure}
  \centering
  \includegraphics[scale=0.95]{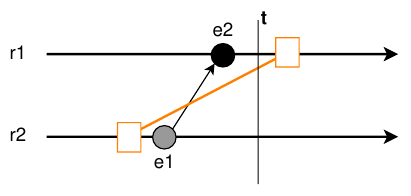}
  \caption{If $e_1$ is an event that does not change the memory
    contents, then the snapshot is quasi-instantaneously consistent
    but not causally consistent. If $e_1$ and $e_2$ are modifying
    events, then the snapshot is neither quasi-instantaneously nor
    causally consistent.}
  \label{fig:quasiatomic}
\end{figure}

As observed by \citet{introducing}, causal consistency is very
permissive but appears to be the smallest common denominator of any
acceptable quality measure of atomicity. However, it is too permissive
to be easily attainable. Instantaneous consistency, the ideal notion
of atomicity, is too strong. Quasi-instantaneous consistency is an
intermediate definition that does not need to halt the system but
still can express a similar level of instantaneousness. It is close in
spirit to \citeauthor{introducing}'s concept of \emph{time
  consistency}, which is satisfied ``if there was a point in time
during the acquisition process in which the content of those pages
co-existed in the memory of the system'' \citep{introducing}.

\section{Achieving Consistency}
\label{sec:achieving}

\begin{fullversion}
  Whenever freezing a system is not a possiblity and the system has to keep
running during the acquisition, achieving atomicity is challenging. As prior
research showed, atomicity violations are very
likely to occur in memory snapshots that are created on a running system. 
\end{fullversion}

One possibility to achieve quasi-instantaneous consistency of
snapshots created on a running system is to ensure that, after the
acquisition process has been started, no memory content will be
modified before it has been copied. This method is known as
\emph{copy-on-write} in the area of systems software. It is rather
easy to see that a snapshot created using this technique satisfies
quasi-instantaneous consistency, because any state changes occurring
after the start of the acquisition are not included into the memory
snapshot. Therefore, the contents in the snapshot are equal to those the
memory regions contained at the start of the acquisition.

An example can be seen in Fig.~\ref{fig:cow}: Because $e_2$ would have
been executed on memory region $r_2$ after the start of the
acquisition process but before it was copied, the event is interrupted
and the region is copied first. Afterwards the event can be executed.

\begin{figure}
  \centering
  \includegraphics[scale=0.95]{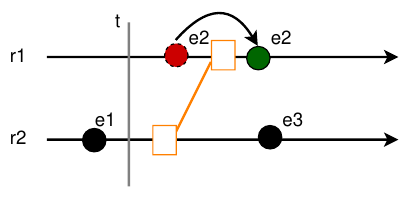}
  \caption{After the acquisition has been started at time t, no changes to
	memory regions that have not been copied yet are allowed. Once a region
	has been copied its content may be changed by events.}
  \label{fig:cow}
\end{figure}

Manipulating the page table entries is a convenient method to do
this. By taking away the write permission of all page tables entries,
trying to change the page will result in a page fault that can be
handled accordingly. But the system of which the memory snapshot is
created should not be manipulated to such a great extent. Therefore,
instead of manipulating the page tables of the operating system, a
hypervisor can be used. By taking away the write permissions of the
guest pages on the hypervisor level, write accesses will cause an exit
to the hypervisor. Then the page can be copied and the write
permission for the page can be turned on again. Over the last years
the technical means to implement the technique changed, as can be seen
in the works of \citet{martignoni2010, yu2012} and
\citet{kiperberg2019}. As it cannot always be expected that a system
is already virtualized, methods for the ``on the fly'' virtualization
of a system have also been proposed \citep{hyperleech}.

\section{Measuring Consistency}
\label{sec:measuring}

While theoretical quality criteria are an important step towards
understanding which factors influence the usefulness of a memory
snapshot, the question remains how these criteria can be
measured. Because of the limitations of previous measurement
approaches, we describe an alternative method to evaluate a snapshot
with regard to causal consistency.

Since it is difficult to trace all causal relationships in a system,
we suggest to only keep track of causal relationships in a part of the
system. Tracking causal relationships within one process is a
manageable task. It also allows to perform the evaluation for closed
source tools and on different operating systems. The idea is to use a
simple test program in which memory regions are allocated, and
causally dependent changes on the regions, observed using vector
clocks. If quasi-instantaneous consistency should be measured instead,
realtime timestamps can be used in place of vector clocks.

\subsection{Using vector clocks}\label{sec:vclocks}

\begin{figure}
  \centering
  \includegraphics[scale=0.8]{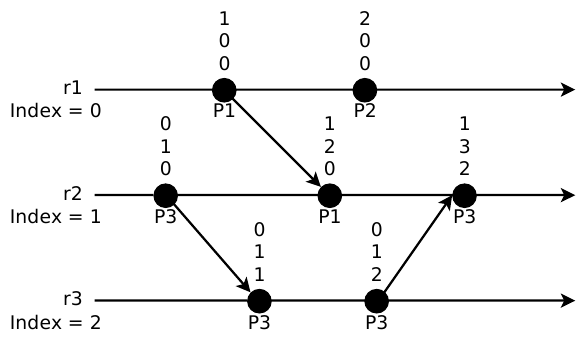}
  \caption{Each memory region has an assigned vector clock and an
    index for the local counter. Three processes, $p_1$, $p_2$, and
    $p_3$, interact with the memory regions. Whenever a process
    accesses a memory region the vector clock the process saw last and
    the local vector clock are combined and the counter at the index
    of the region is increased.}
  \label{fig:vc}
\end{figure}

Vector clocks are a concept from distributed computing that allows to
track logical time by ordering events \citep{virtual}. Usually, vector
clocks are assigned to different processes in a distributed
system. When events are executed by a process or messages between
different processes are exchanged, the clocks need to be updated. As
we want to observe changes on memory regions, all subsequent examples
assign vector clocks to memory regions not processes, the original
definitions by \citet{virtual} are adapted accordingly.

The idea is to assign a counter to each memory region that
increases every time an event (i.e., an access by a process) is
executed on the region. Such counters allow to track causal
relationships between events in the following way: Additionally, to the
local counter, each memory region's vector clock has fields for all
other memory regions' counters.  If we assume a system with
$n$ memory regions, each region has a vector clock (a vector of
counters) $C$ of size $n$. Each region is assigned a unique index to
this vector at which its local counter is found. Whenever a process
accesses a memory region it saves the region's vector clock. When it
accesses the next region the two vector clocks are compared and for
each index the higher value is chosen. Then the local counter is
incremented.

Causal relationships can be detected with vector clocks by ordering them
using the happened-before relation \citep{virtual}:
For two vector clocks, \(C_i\) and \(C_j\), \(C_i < C_j\) holds iff
\begin{multline*}
  \forall x \in \{1, \ldots, n \} : C_i[x] \leq C_j[x]) \\
        \mathrel{\land} (\exists x: C_i[x] < C_j[x])
\end{multline*}
Whenever this does not hold for two vector clocks, the causing
events are concurrent to each other.  Fig.~\ref{fig:vc} shows an
example for three memory regions and their assigned vector
clocks. Each time a process accesses a memory region the vector clocks
are updated. Using the definition we can, for example, see that the
event caused by process $p_2$ on region $r_1$ is only causally
dependent on the event caused by $p_1$ on the same memory region and
concurrent to all other events.

\begin{figure}
  \centering
  \includegraphics[scale=0.8]{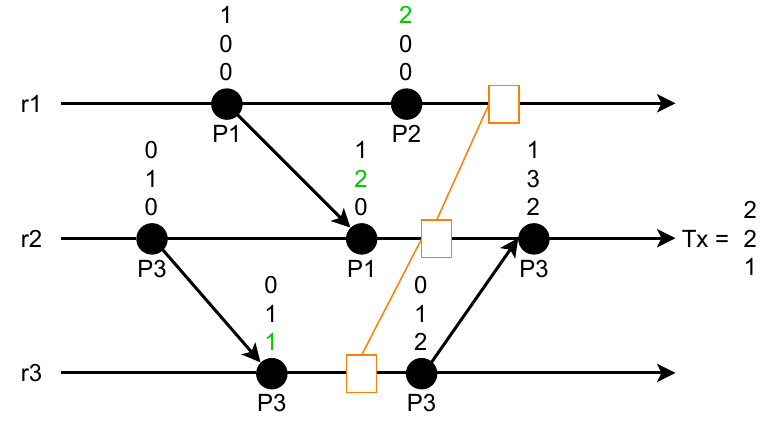}
  \caption{In a consistent snapshot the values of the global time at each
	region's index and that region's vector at the index are equal.}
  \label{fig:vccons}
\end{figure}

\begin{figure}
  \centering
  \includegraphics[scale=0.8]{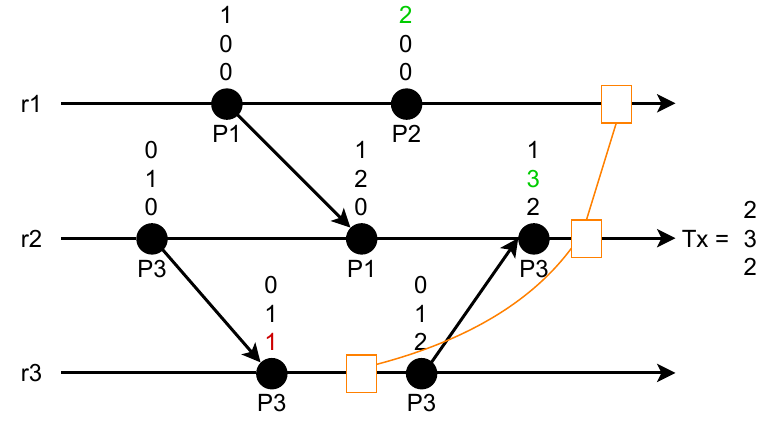}
  \caption{This snapshot is inconsistent because the event on region
    $r_3$ on which the last event included in the snapshot on region
    $r_2$ is causally dependent is not included as well. This becomes
    evident when comparing the vector clock of $r_3$ with the global
    time.}
  \label{fig:vcincons}
\end{figure}

With the help of the vector clocks, inconsistencies in a snapshot (or
cut) can be found.  First, the \emph{global time} vector \(t_s\) of
the snapshot $s$ needs to be calculated. This vector consists of the
highest value for each index in all vector clocks \citep{virtual} as
$$ t_s = sup(C_1,...,C_n). $$
Next, each region's vector clock is compared to the global time. More
precisely, the value of the region's vector clock at its index is
compared to $t_s$ at the same index.  Snapshot $s$ is consistent iff
$t_s = (C_1[1], ..., C_n[n])$ \citep{virtual}.  Fig.~\ref{fig:vccons}
shows an example for a consistent snapshot. Comparing the global time
to the regions' vector clocks shows that for all memory regions the
value at the respective index is equal to the global time vector at
the same index. This means that for all regions the causing event of
the latest access on them is included in the
snapshot. Fig.~\ref{fig:vcincons} shows an example for an inconsistent
snapshot. In this case the last event on $r_3$ is missing from the
snapshot.  This is a problem as the last event on $r_2$ which is
included in the snapshot is causally dependent on the
event. Therefore, the vector clock has not been updated yet and the
inconsistency can be identified by comparing the vector clock to the
global time vector.

\begin{fullversion}
\subsection{Test program}\label{sec:testprog}

In the test program memory regions are allocated and the causal relationships
between them are tracked using vector clocks. The causal relationships are created
by multiple threads that access the regions and change their contents.
Multiple ways to implement something like this exist, one
possibility is a linked list which is modified by multiple threads. The
elements of the linked list are the memory regions that are being modified. They
are linked with pointers to the previous and next element. Additionally, each list
element has a buffer to store data, and a vector clock to keep track of the
causal dependencies between the list elements. Multiple threads
remove elements from the list, add them in new places and change the contents of
the list elements' buffers. For simplicity the number of elements and threads is
static. When a thread makes a change
to a list element, it keeps a copy of the vector clock. On the next change the
thread makes to a list element this vector clock is used to update the values of
the list element's vector clock.

\subsection{Planned evaluation}\label{sec:eval}
An evaluation using the described test program is planned for the future. The
general workflow of the evaluation
is to execute the test program with varying numbers of list
elements and numbers of threads. After the test program has been running for a
randomized time, a memory snapshot is created.
From the created memoy snapshots the heap of the test program is reconstructed and
the list elements and vector clocks are extracted.
Then inconsistencies can be searched by computing the global time and comparing
each vector clock as described in section \ref{sec:vclocks}. If inconsistencies are
found their impact can be further evaluated by examining the list elements.
Apart from the number of threads and elements in the test program the address
space size and overall system load can be varied.

An overall goal for the evaluation is to create a framework that is easily
reusable by others.
\begin{work}One possibility to ensure this could be to provide Docker containers
that automatically install the necessary software, start the test program and
memory dumping tool. The heap reconstruction can be performed with, for example,
Volatility3. For the extraction of the list elements and vector clocks a python
script has already been written.
\end{work}
\end{fullversion}

\subsection{Using realtime clocks}

While they allow to capture any programmable cause-effect
relationship, vector clocks are rather expensive in terms of memory. A
cheap replacement of vector clocks is to simply take the measurement
of a realtime clock as timestamp (if such a clock exists). With this
idea, the same approach as described above can be used to track the sequence in
which events occurred: Each memory region is
assigned a single realtime
timestamp which corresponds to the time that the most recent event
happened in that memory region. The vector of all such timestamps is
called the \emph{current time}.

A snapshot algorithm now has to keep track of these most recent
timestamps during the acquisition of memory regions. In analogy to the
definitions for vector clocks, the global time $t_s$ of a snapshot $s$
is the vector of these timestamps, one for each memory region.  A
snapshot $s$ is consistent if the current time is equal to the global
time $t_s$. This is a sufficient criterion for quasi-instantaneous consistency
not a necessary one. Finding a \emph{necessary} criterion is an open question.  

Obviously, this method is much more space efficient than using vector
clocks, but it can only be used to check for \emph{quasi-instantaneous}
consistency and not for causal consistency.

\section{Defining Integrity}
\label{sec:integrity}

We briefly revisit the concept of integrity. Integrity wishes to
capture the degree to which a snapshot was influenced by the
measurement method itself. To do this, it is necessary to distinguish
changes on storage that are due to the snapshot mechanism and those
that are not. \citet{correctness} do this by defining a specific point
in time $\tau$ which indicates the ``start'' of the
measurement. Changes before $\tau$ are not due to the measurement
mechanism and changes after $\tau$ affect integrity.

In the definition of \citet{correctness}, a snapshot satisfies
\emph{integrity with respect to $\tau$} if the memory contents
did not change between this point in time and the time of the
acquisition of the region, formally:
\begin{multline*}
  \forall r \in R: \tau \leq s(r).t \implies \forall t' \in T: \\
  \tau \leq t' \leq s(r).t: s(r).v = m(r, t') 
\end{multline*}
With this definition, whenever a memory region's content changes after $\tau$, integrity is not
satisfied anymore. We therefore call it \emph{restrictive integrity}. However,
if the original value is restored before the memory
region is added to the snapshot, then the result is the same as if the change never
happened. 
We therefore propose a slightly weaker definition of integrity, called
\emph{permissive integrity}, that allows changes in memory after $\tau$ as long
as the value that is written to the snapshot is the same as the value that
existed in memory at time $\tau$.

\begin{definition}[permissive integrity]
  A snapshot $s$ satisfies \emph{integrity with respect to time $\tau$} iff
	$$\forall r \in R: \tau\leq s(r).t \implies s(r).v = m(r,\tau) $$
\end{definition}

This definition is more permissive and enables new acquisition
techniques that selectively overwrite memory regions if the
snapshot contains the original data. Obviously, a snapshot the
satisfies restrictive integrity with respect to $\tau$ also satisfies
permissive integrity with respect to $\tau$.


\section{Relations between the Quality Criteria}
\label{sec:relations}

In the original definitions of \citet{correctness}, the three notions
of correctness, atomicity and integrity were not fully independent. In
fact, integrity appeared to be unnecessarily strong and complex: A
snapshot that satisfied integrity also satisfied atomicity and
correctness. From a conceptual point of view, it is better to have
definitions that do not imply each other to separate
concerns.

\begin{figure*}
  \centering
  \includegraphics[scale=0.9]{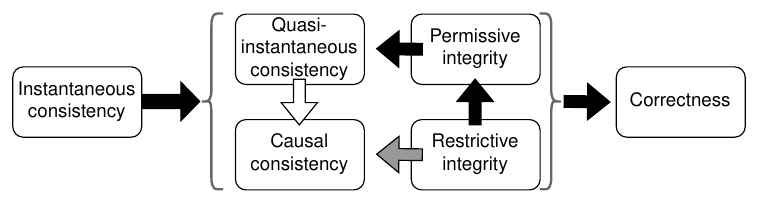}
  \caption{The relationships between the different quality criteria.
    Black arrows indicate implications without further assumptions.
    The gray arrow stands for an implication that only exists if we
    assume that we only observe causal relationships between modifying
    events. The implication shown by the white arrow assume that all
    events are uniquely modifying.}
  \label{fig:bigpicture}
\end{figure*}

Fig.~\ref{fig:bigpicture} shows an overview of the quality criteria
and their relations with respect to implication.  The implications
between the different consistency definitions (see
section~\ref{sec:qualcrit}) and between the two integrity definitions
(see section~\ref{sec:integrity}) are already integrated. As might be
expected, instantaneous consistency and restrictive integrity are the
strongest notions and do not imply each other in any way. Weaker
consistency and integrity definitions are, however, not so easily
separable. As already observed above, their relations also depend on
further assumptions about the observability of events.

We first note that restrictive integrity implies causal consistency
under the assumption that only modifying events are observed. This is
because restrictive integrity, for every memory region, disallows
\emph{any} state changes between $\tau$ and the time the snapshot is
taken. Therefore, if all events are modifying, no event can happen
between $\tau$ and the snapshot, and therefore the snapshot must be
causally consistent. 

The relation between permissive integrity and quasi-instantaneous
consistency is particularly delicate.  Fig.~\ref{fig:quasinotint}
shows a quasi-instantaneous snapshot that does not satisfy permissive
integrity: The snapshot does not satisfy (permissive) integrity with
regard to $\tau$ but a point in time $t$ can be found at which an
instantaneous snapshot would have contained the same values. So
quasi-instantaneous consistency does not imply permissive
integrity. The inverse, however, is true.

\begin{figure}
  \centering
  \includegraphics[scale=0.95]{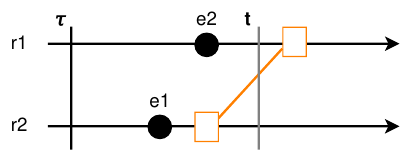}
  \caption{While this snapshot does not satisfy integrity with respect
    to $\tau$ it still satisfies quasi-instantaneous consistency
    because an instantaneous snapshot at time t would have contained
    the same values.}
  \label{fig:quasinotint}
\end{figure}

\begin{proposition}
  \label{prop:2}
  Every snapshot that satisfies permissive integrity with respect to
  $\tau$ also satisfies quasi-instantaneous consistency.
\end{proposition}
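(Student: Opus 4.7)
The plan is to witness quasi-instantaneous consistency directly by constructing the hypothetical instantaneous snapshot $s'$ at the distinguished time $\tau$ itself. Concretely, I would define $s'$ by setting $s'(r) = (m(r,\tau),\tau)$ for every $r\in R$. By construction $s'(r).t = \tau = s'(r').t$ for all $r,r'$, so $s'$ is instantaneous in the sense of the definition in Section~\ref{sec:qualcrit}.

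Next I would show that $s'(r).v = s(r).v$ for every memory region $r$, which is the second conjunct in the definition of quasi-instantaneous consistency. Here I invoke the hypothesis that $s$ satisfies permissive integrity with respect to $\tau$. Assuming, as is implicit in the integrity definition of \citet{correctness} and inherited by our permissive variant, that $\tau$ denotes the start of the acquisition and hence $\tau \leq s(r).t$ for every $r$, the implication in the definition of permissive integrity fires and gives $s(r).v = m(r,\tau) = s'(r).v$. Combining the two points, $s'$ witnesses the existential in the definition of quasi-instantaneous consistency, completing the argument.

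The only subtle point — and the one I would flag explicitly in the write-up — is the edge case $s(r).t < \tau$, where permissive integrity yields no information about $s(r).v$. The natural reading (and the one used by \citet{correctness}) is that $\tau$ marks the beginning of the acquisition, so that no region is copied before $\tau$; under this reading the case is vacuous. If one wished to drop this tacit assumption, one would have to either restrict the statement or strengthen the definition of integrity, but within the framework of the paper the above witness construction is immediate.
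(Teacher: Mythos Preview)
Your argument is correct and matches the paper's own proof essentially verbatim: the paper also witnesses quasi-instantaneous consistency by taking the instantaneous snapshot $s'$ at time $\tau$, observing $s'(r).v = m(r,\tau)$, and concluding that $s$ and $s'$ agree on all values. Your explicit handling of the edge case $s(r).t < \tau$ is even a bit more careful than the paper, which silently assumes (as you do) that $\tau$ precedes every acquisition time.
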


\begin{proof}
  Let $s$ be a snapshot that satisfies permissive integrity with
  respect to $\tau$. From the definition this implies that
  $$\forall r\in R: \tau\leq s(r).t \Rightarrow s(r).v=m(r,\tau).$$
  Now consider the instantaneous snapshot $s'$ taken at time
  $\tau$. Since $s'$ was taken at time $\tau$, for all memory regions
  $r$ holds that $s'(r).t=\tau$ and $s'(r).v = m(r,\tau)$. This means
  that $s$ and $s'$ were taken at different times but contain the same
  values, namely the values stored in memory at time $\tau$. So
  overall there exists an instantaneous snapshot that has the same
  values as $s$. Therefore, $s$ is quasi-instantaneous.
\end{proof}

Note that Proposition~\ref{prop:2} holds without any further
assumptions about the observability of events. Therefore, restrictive
integrity also implies quasi-instantaneous consistency without any
further assumptions.  If we assume that we have only uniquely
modifying events, both quasi-instantaneous consistency and permissive
integrity imply causal consistency. Integrity and consistency therefore seem
hard to disentangle completely from each other. If events do not necessarily 
change the values of memory regions, then permissive integrity and causal
consistency are independent of each other.

Both integrity definitions imply correctness because they compare the
contents of the snapshot with the contents of memory. If the
acquisition method were functioning incorrectly this comparison would
be likely to fail. This fact shows that correctness is not really
necessary as an independent concept. Integrity and consistency suffice
to determine the quality of snapshots.

\section{Legal Implications}
\label{sec:legal}



Knowing about the quality of the memory snapshots produced by different tools under certain
circumstances can help investigators to choose the tool best suited for a
concrete investigation. But does it also oblige them to use the best available
tool?

When we try to answer this question, we have to think about the evidential value
of the memory snapshot. Because the quality of the memory snapshot influences the
reliability and completeness of the subsequent analysis results, their
evidential value is also influenced by the memory snapshot's evidential value.
The value a piece of evidence has is equal to the probability that deductions
based on it will be true \citep{heinson2016}.
As the evidence a court grounds its decision in has to be of the highest possible
quality to justify a conviction \citep{kk}[\S 261 recital 5 ff.]
investigators should strive for gathering evidence with an as high as possible
evidential value.
The evidential value of data is determined by the forensic process with which it was
gathered. Among others its authenticity, and integrity as well as the
reliability of used methods should be ensured \citep{safeguarding}.
Tools that are known to produce incorrect memory snapshots must be excluded from
usage in an investigation because this would also cast doubt on any analysis results
and conclusions derived from them, their evidential value would be too low to
justify a conviction.

In the case of integrity and atomicity a
closer look is needed. A tool that produces a snapshot with low integrity overwrites more parts of
the memory than a tool that produces a snapshot with a higher degree of integrity.
Therefore, loss of information in the memory snapshot is more likely if less
integrity can be achieved. As the presented evidence should also be as authentic
as possible \citep{safeguarding} the method that extracts memory snapshots with
higher integrity should be chosen if it also produces correct ones. 

Because
less atomic snapshots are also more likely to have inconsistencies than more atomic
snapshots, the reliability and completeness of the results of an analysis of such a
snapshot can be questioned. Therefore, in trying to adhere to the quality requirements of
evidence used in court decisions, the more atomic method should be chosen if
possible. It should also fulfill the requirements regarding correctness and
integrity.
Another influencing factor on the evidential value of results based on less
atomic memory snapshots would be how likely it is that inconsistencies in memory snapshots lead
to analysis results that suggest the presence of incriminating evidence
even though it never existed in memory. To the best of the authors' knowledge no
research has been published about this topic.


Tools with atomicity guarantees, be they instantaneous,
quasi-instantaneous or causal consistency, can often not be used due to the
technical circumstances of the investigation and time constraints. If a tool
without atomicity guarantees is used, many
inconsistencies might occur. The information how likely their occurrence is for a
specific snapshot is helpful because investigators or expert witnesses who present
the results of a technical
analysis need to explain the likeliness of errors or missing information to the
court. The court should also be enabled to evaluate how likely different
hypotheses based on the presented evidence are and if the evidence is reliable
\citep{safeguarding}.
While it is possible to find some inconsistencies by examining the data structures of the
operating system, thereby enabling analysts to report them exactly, others might
not be visible. Therefore indicators for the
likeliness of the occurrence of inconsistencies, like for example suggested by
\citet{introducing}, should be made available by the memory snapshoting
tool to the analyst. This would enable analysts to provide founded estimates
about the likeliness of the analysis results being incomplete or the possibility
of wrong results due to inconsistencies.

%


\section{Conclusions and Future Work}
\label{sec:concl}

The new notions of atomicity and integrity wish to clarify the
conditions under which snapshots of storage can be considered as
``good''. The definitions assume a synchronous system but cover any
form of storage which cannot be ``frozen'' and where individual memory
regions have to be acquired sequentially.

The measurement approach described in Section~\ref{sec:measuring}
needs to be evaluated in future work. The question remains how the
results for a subset of memory regions can be transferred to the
quality of the complete memory snapshot. Therefore, it will be
necessary to perform an evaluation of the method itself before testing
different memory dumping tools. To evaluate the method, the same steps
as for a tool evaluation can be performed. Memory snapshots are
created while the test program is running and, with the help of vector
clocks, atomicity violations are identified.  Then, other indicators
for inconsistencies, like those described by \citet{introducing}, need
to be examined.  A ground truth of the memory state might be helpful
to identify further inconsistencies. The ground truth could be created
in a virtualized environment by taking atomic snapshots from the
hypervisor.  Identifying as many indicators for inconsistencies as
possible and creating a big data set of analyzed memory snapshots is
another challenge. Indicators can be identified from the related
literature.

The creation of a big data set requires the automation of memory
snapshot creation and the analysis of memory snapshots, and the
organization of the analysis results. A big data set enables
statistical analysis with which it can be evaluated, e.g., if the
number of atomicity violations in a subset of memory can be used to
extrapolate the occurrence of other inconsistency indicators in other
memory areas.  Another question that seems worth examining is if
scenarios can be observed in which inconsistencies lead to false
conclusions that create incriminating evidence where none is present.

\subsection*{Acknowledgments}

We thank Nicole Scheler and Ralph Palutke for helpful comments on
previous versions of this paper. Work was supported by Deutsche
Forschungsgemeinschaft (DFG, German Research Foundation) as part of
the Research and Training Group 2475 ``Cybercrime and Forensic
Computing'' (grant number 393541319/GRK2475/1-2019).

\bibliography{quellen}

\printcredits

\end{document}